\newtheorem{theorem}{Theorem}
\newtheorem{lemma}{Lemma}
\begin{document}

\title{\bf Clique in 3-track interval graphs is APX-hard\thanks{%
Supported in part by NSF grant DBI-0743670.}}

\author{Minghui Jiang\medskip\\
Department of Computer Science,
Utah State University,
Logan, UT 84322, USA\smallskip\\
\texttt{mjiang@cc.usu.edu}}

\maketitle

\begin{abstract}
Butman, Hermelin, Lewenstein, and Rawitz proved that
\textsc{Clique} in $t$-interval graphs is NP-hard for $t \ge 3$.
We strengthen this result to show that
\textsc{Clique} in $3$-track interval graphs is APX-hard.
\end{abstract}


\textbf{Keywords:}
multiple-interval graphs,
computational complexity.

\section{Introduction}

We prove the following theorem:

\begin{theorem}\label{T1}
\textsc{Clique} in $3$-track interval graphs is APX-hard.
\end{theorem}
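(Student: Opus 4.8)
The plan is to prove APX-hardness by exhibiting an approximation-preserving reduction, specifically an L-reduction, from a bounded-degree version of \textsc{Maximum Independent Set}, which is APX-hard already on graphs of maximum degree $3$. The guiding identity is that a clique in an intersection graph is a set of pairwise intersecting objects, and that the cliques of a graph $G$ are exactly the independent sets of its complement $\overline{G}$. So from a bounded-degree graph $H$ I would try to construct a family of $3$-track intervals whose intersection graph agrees, on the relevant ``vertex'' objects, with $\overline{H}$; maximum cliques would then read off maximum independent sets of $H$. Recall that a graph is a $3$-track interval graph precisely when its edge set can be covered by three interval graphs on the same vertex set (one per track), so the task becomes covering the non-edges of $H$ by three interval graphs while covering none of the edges of $H$.

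Concretely, I would represent each vertex $v$ by a $3$-track interval $I_v$, laying the intervals out so that on each track they share a large common region by default, hence overlap; this makes every pair of vertices adjacent unless deliberately separated. For each edge $\{u,v\}\in E(H)$ I must force $I_u$ and $I_v$ to be disjoint on \emph{all three} tracks. The essential use of the hypothesis $t=3$ is that, because $H$ has bounded degree, the (at most three) edges incident to a vertex can be distributed across the three tracks so that on any single track the separations that must be enforced do not interfere — each track then carries only a sparse, matching-like pattern of forced disjointnesses that an interval layout can realize. Introducing auxiliary objects and a little local gadgetry (both to cope with the failure of exact proper $3$-edge-colorability and to realize each track's separation pattern as a genuine interval graph) is where the construction has to be done carefully.

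The main obstacle — and the step that genuinely upgrades the earlier NP-hardness to APX-hardness — is the quantitative soundness analysis. For an L-reduction I need two linear guarantees: that the optimum clique size of the constructed $3$-track interval graph is within a constant factor of the optimum independent set size of $H$, and that any clique can be decoded into an independent set of $H$ while losing only a constant multiple of that clique's deficiency from the optimum. The delicate point is ruling out \emph{spurious} cliques: sets of objects that pairwise intersect by using different tracks for different pairs yet do not correspond to an independent set. I expect to control these by a Helly-type argument on each track (pairwise-intersecting intervals on one line share a common point), combined with the matching structure of the per-track separation patterns, so that any clique decomposes across the three tracks into pieces that are forced to be independent in $H$. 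Keeping the number of auxiliary objects linear in the size of $H$ is what preserves the approximation gap and yields APX-hardness.
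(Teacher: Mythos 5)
Your overall strategy (reduce from bounded-degree \textsc{Max Independent Set} by realizing $\overline{H}$ as a $3$-track interval graph) is genuinely different from the paper's, which instead L-reduces from a conjunctive variant of \textsc{Max-2-SAT} using $12$ copies of a literal interval per occurrence plus one interval per clause. But your proposal has a concrete gap, and it sits exactly where you wave your hands: the construction of the representation. The claim that a track carrying ``only a sparse, matching-like pattern of forced disjointnesses'' can be realized by intervals is false. If on one track the intervals for a matched pair must be disjoint while all other pairs on that track intersect, then already for two matched pairs the track's intersection graph is $K_4$ minus a perfect matching, i.e.\ $C_4$, which is not chordal and hence not an interval graph. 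This is not a technicality: the paper's own construction needs \emph{two} tracks (the nested intervals $(-i,i)$ against the shifted intervals $(i,n+1)$, and their mirror images) just to realize the complement of a single perfect matching --- the one pairing each positive literal with its negative literal. A degree-$3$ graph $H$ decomposes into four matchings, so the ``one matching per track'' budget does not come close to fitting in three tracks, and you give no argument that $\overline{H}$ has track number at most $3$ at all. That statement is the entire content of the reduction and it is neither obvious nor supplied. (The route you are gesturing at can be made to work --- this is essentially what Francis, Gon\c{c}alves, and Ochem do, as noted in the paper's postscript --- but only after first replacing $H$ by its full subdivision, which preserves APX-hardness of \textsc{Independent Set} on bounded-degree graphs and makes the complement representable; representing $\overline{H}$ directly is the wrong target.)

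A secondary point: you locate the ``main obstacle'' in the quantitative soundness analysis and in ruling out spurious cliques via a Helly argument, but if your construction really did realize $\overline{H}$ exactly as the intersection graph of the vertex objects, then cliques would be independent sets of $H$ by definition and the L-reduction parameters would be $\alpha=\beta=1$ with nothing to prove. The Helly/decoding machinery only becomes necessary once auxiliary objects enter, and you have not specified any. So the difficulty is misplaced: the decoding is easy, the encoding is the theorem. By contrast, the paper's accounting is where its (modest) work lies: duplicating each literal's $3$-track interval $12$ times guarantees that any clique can be made canonical (one literal per variable, all copies present), giving $w=12n+z$, and the bound $z^*\ge m/4\ge n/8$ turns this additive relation into an L-reduction with $\alpha=97$, $\beta=1$.
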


\paragraph{Preliminaries}

Given a set $X = \{ x_1, \ldots, x_n \}$ of $n$ boolean variables
and a set $C = \{c_1, \ldots, c_m \}$ of $m$ clauses,
where each variable occurs at most (resp.~exactly) $p$ times in the clauses,
and each clause is the conjunction (resp.~disjunction) of exactly $q$ literals,
\textsc{$p$-Occ-Max-E$q$-CSAT}
(resp.~\textsc{E$p$-Occ-Max-E$q$-SAT})
is the problem of finding an assignment for $X$ that satisfies the maximum
number of clauses in $C$.

\begin{lemma}
\textsc{$12$-Occ-Max-E$2$-CSAT} is APX-hard.
\end{lemma}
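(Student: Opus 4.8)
The plan is to give an approximation-preserving reduction (an L-reduction with both parameters equal to $1$) from a bounded-degree version of \textsc{Max-Cut}, which is well known to be APX-hard even on graphs of maximum degree $3$. Given such a graph $G = (V,E)$, I would introduce one boolean variable $x_v$ for every vertex $v \in V$, with the intended reading that $x_v$ is true iff $v$ lies on one fixed side of the cut. The conjunctive ``$q=2$'' form of the target problem is exactly what makes this clean: a cut edge is naturally captured by a pair of mutually exclusive conjunctions.

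For each edge $\{u,v\} \in E$, I would create the two clauses $(x_u \wedge \bar{x}_v)$ and $(\bar{x}_u \wedge x_v)$, and take these as the instance of \textsc{Max-E2-CSAT}. The key point is an exact, edge-local correspondence: under any assignment, if $x_u = x_v$ then neither of the two clauses of edge $\{u,v\}$ is satisfied, whereas if $x_u \ne x_v$ then exactly one of them is. Hence for every assignment/partition pair the number of satisfied conjunctive clauses equals exactly the number of edges cut. Consequently the two optima coincide, and any \textsc{Max-E2-CSAT} solution of value $c$ reads off directly as a cut of the same value $c$; this makes the reduction an L-reduction with $\alpha = \beta = 1$, so APX-hardness transfers.

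It then remains to check the occurrence bound. A vertex $v$ of degree at most $3$ lies on at most $3$ edges, and within the two clauses of each incident edge the variable $x_v$ appears exactly twice (once positively, once negatively), so each variable occurs at most $3 \cdot 2 = 6 \le 12$ times. Thus the output is a legal instance of \textsc{12-Occ-Max-E2-CSAT}, and since even these restricted instances are APX-hard, so is the stated problem. If one instead wants to hit the bound $12$ exactly, the same idea applied to \textsc{E4-Occ-Max-E2-SAT} works: replace each disjunction $(\ell_1 \vee \ell_2)$ by the three conjunctions $(\ell_1 \wedge \ell_2)$, $(\ell_1 \wedge \bar{\ell}_2)$, $(\bar{\ell}_1 \wedge \ell_2)$, exactly one of which is satisfied precisely when the original clause is, again an exact value correspondence, now with exactly $3 \cdot 4 = 12$ occurrences per variable.

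I expect the approximation analysis itself to be essentially free, since the per-edge (or per-clause) correspondence is exact rather than merely linear. The hard part will therefore be bookkeeping: selecting a source problem that is simultaneously APX-hard and has a low enough degree/occurrence parameter that the gadget blow-up stays within $12$, and confirming that the produced clauses are honest \textsc{E2-CSAT} clauses — two distinct literals, no unit clauses and no repeated variable within a clause — which holds here because every edge joins two distinct vertices.
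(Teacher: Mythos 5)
Your argument is correct, but it takes a genuinely different route from the paper. The paper starts from \textsc{E$3$-Occ-Max-E$2$-SAT} (Berman--Karpinski) and replaces each disjunction $x_1 \lor x_2$ by a gadget of six conjunctions involving a fresh dummy variable $y$, arranged so that $0$ of the $6$ are satisfied when the disjunction fails and exactly $2$ are satisfied when it holds; this yields an L-reduction with $\alpha=2$ and $\beta=1/2$, and the occurrence bound $12 = 3 \times 4$ comes from each of the three occurrences of a variable spawning four literal occurrences inside its gadget. You instead reduce from \textsc{Max-Cut} on graphs of maximum degree $3$, encoding each edge $\{u,v\}$ by the exclusive pair $(x_u \wedge \bar x_v)$, $(\bar x_u \wedge x_v)$, which gives an exact value correspondence (satisfied clauses $=$ cut edges), hence $\alpha=\beta=1$, and only $6 \le 12$ occurrences per variable. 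Your reduction is tighter and avoids the dummy variable entirely, at the cost of relying on a different (though equally standard) APX-hard source problem; since the lemma only asserts an upper bound of $12$ on occurrences, the slack is harmless. The one loose end is your closing alternative via \textsc{E$4$-Occ-Max-E$2$-SAT}: you assert its APX-hardness without justification (the standard Berman--Karpinski result is for exactly $3$ occurrences), but since that remark is not needed for your main argument, it does not affect correctness.
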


\begin{proof}
It is known that \textsc{E$3$-Occ-Max-E$2$-SAT} is APX-hard~\cite{BK03}.
For each disjunctive clause $x_1 \lor x_2$,
we can construct a set of $6$ conjunctive clauses
$$
x_1 \land y
\qquad
x_1 \land \bar y
\qquad
x_2 \land y
\qquad
x_2 \land \bar y
\qquad
x_1 \land \bar x_2
\qquad
\bar x_1 \land x_2
$$
where $y$ is an additional dummy variable.
If both $x_1$ and $x_2$ are false, then
none of the $6$ clauses is satisfied.
If either $x_1$ or $x_2$ is true, then
exactly $2$ of the $6$ clauses are satisfied.
Thus we have a gap-preserving L-reduction~\cite{PY91}
from
\textsc{E$3$-Occ-Max-E$2$-SAT}
to
\textsc{$12$-Occ-Max-E$2$-CSAT}
with $\alpha=2$ and $\beta=1/2$.
\end{proof}

\section{Proof of Theorem~\ref{T1}}

We prove that \textsc{Clique} in $3$-track interval graphs is APX-hard
by an L-reduction from
\textsc{$12$-Occ-Max-E$2$-CSAT}.
Given an instance $(X, C)$ of
\textsc{$12$-Occ-Max-E$2$-CSAT},
we construct a $3$-track interval graph $G$ as the intersection graph
of a set of $24n + m$ $3$-track intervals:
\begin{itemize}

\item
$12$ copies of a $3$-track interval for the positive literal $x_i$
of each variable $x_i \in X$;

\item
$12$ copies of a $3$-track interval for the negative literal $\bar x_i$
of each variable $x_i \in X$;

\item
$1$ copy of a $3$-track interval for each clause $c_k \in C$.

\end{itemize}
Each $3$-track interval in our construction
is the union of three open intervals,
one interval on each track,
of integer endpoints between $-(n+1)$ and $n+1$.

For each variable $x_i$,
the $3$-track interval for the positive literal $x_i$
is the union of the following three intervals
$$
\text{track~1: } (-i, i)
\qquad
\text{track~2: } (i, n+1)
\qquad
\text{track~3: } (i, i+1)
$$
and the $3$-track interval for the negative literal $\bar x_i$
is the union of the following three intervals
$$
\text{track~1: } (i, n+1)
\qquad
\text{track~2: } (-i, i)
\qquad
\text{track~3: } (-(i+1), -i)
$$

Assume without loss of generality that no clause contains both the positive
literal and the negative literal of the same variable.
For each clause $c_k$,
we construct one $3$-track interval following one of four cases:
\begin{enumerate}

\item
$c_k = x_i \land x_j$, $i \le j$.
$$
\text{track~1: } (-(n+1), -j)
\qquad
\text{track~2: } (-(n+1), i)
\qquad
\text{track~3: } \left\{\begin{array}{ll}
(i+1, j) &\text{if } j > i+1
\\
(-1, 1) &\text{if } j = i \text{ or } i+1
\end{array}\right.
$$

\item
$c_k = \bar x_i \land \bar x_j$, $i \le j$.
$$
\text{track~1: } (-(n+1), i)
\qquad
\text{track~2: } (-(n+1), -j)
\qquad
\text{track~3: } \left\{\begin{array}{ll}
(-j, -(i+1)) &\text{if } j > i+1
\\
(-1, 1) &\text{if } j = i \text{ or } i+1
\end{array}\right.
$$

\item
$c_k = x_i \land \bar x_j$, $i < j$.
$$
\text{track~1: } (i, j)
\qquad
\text{track~2: } (-(n+1), -j)
\qquad
\text{track~3: } (-1, i)
$$

\item
$c_k = \bar x_i \land x_j$, $i < j$.
$$
\text{track~1: } (-(n+1), -j)
\qquad
\text{track~2: } (i, j)
\qquad
\text{track~3: } (-i, 1)
$$

\end{enumerate}

\begin{figure}[htbp]
\psfrag{x1}{$x_1$}
\psfrag{x2}{$x_2$}
\psfrag{x3}{$x_3$}
\psfrag{x4}{$x_4$}
\psfrag{x1'}{$\bar x_1$}
\psfrag{x2'}{$\bar x_2$}
\psfrag{x3'}{$\bar x_3$}
\psfrag{x4'}{$\bar x_4$}
\psfrag{c1}{$c_1$}
\psfrag{c2}{$c_2$}
\psfrag{c3}{$c_3$}
\psfrag{c4}{$c_4$}
\psfrag{-5}{$-5$}
\psfrag{0}{$0$}
\psfrag{5}{$5$}
\psfrag{track 1}{track 1}
\psfrag{track 2}{track 2}
\psfrag{track 3}{track 3}
\centering
\resizebox{\linewidth}{!}{\includegraphics{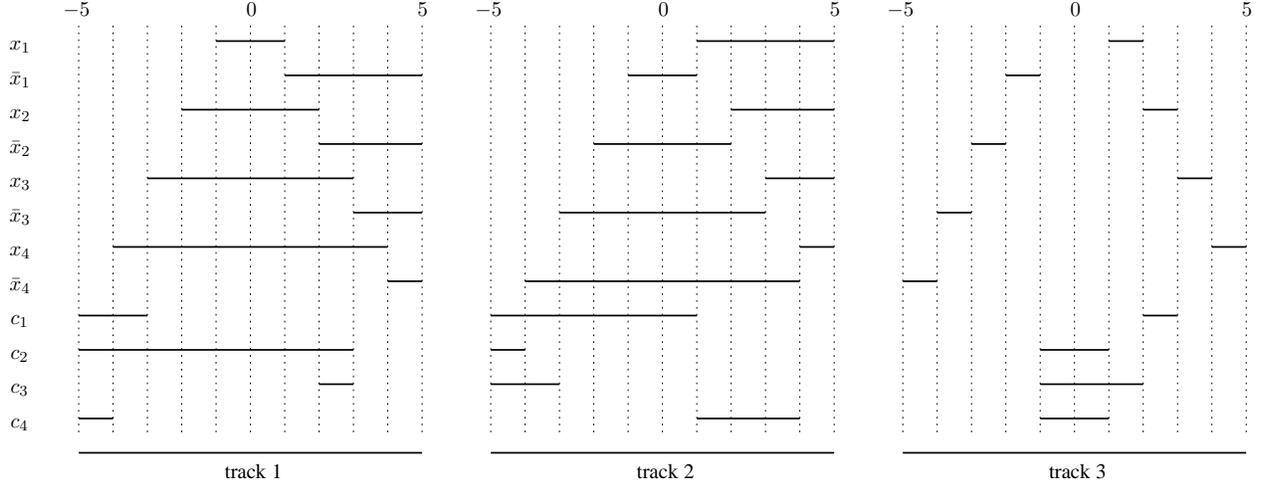}}
\caption{\small%
The set of $3$-track intervals for a \textsc{$12$-Occ-Max-E$2$-CSAT} instance
of $n=4$ variables and $m=4$ clauses
$c_1 = x_1 \land x_3$,
$c_2 = \bar x_3 \land \bar x_4$,
$c_3 = x_2 \land \bar x_3$,
and
$c_4 = \bar x_1 \land x_4$.
Duplicate $3$-track intervals for each literal are omitted from the figure.}
\label{fig:example}
\end{figure}

This completes the construction.
We give an example in Figure~\ref{fig:example}.
The reduction clearly runs in polynomial time.
We have the following lemma:

\begin{lemma}\label{L1}
There is an assignment for $X$ that satisfies at least $z$ clauses in $C$
if and only if
$G$ has a clique of size at least $w = 12n + z$.
\end{lemma}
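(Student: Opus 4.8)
The plan is to prove the two directions of the equivalence separately, after first isolating four structural facts about the intersection pattern of the constructed intervals; all four are verified by simply comparing the interval endpoints on the three tracks. I would record: \textbf{(i)} the two literal intervals $x_i$ and $\bar x_i$ of a single variable are disjoint (they meet only at shared open endpoints on tracks~1 and~2 and lie in opposite halves of track~3), so any clique of $G$ contains copies of at most one literal per variable; \textbf{(ii)} literal intervals of two \emph{distinct} variables always intersect, on track~1 or track~2, irrespective of sign; \textbf{(iii)} any two clause intervals intersect, because each clause interval carries an interval of the form $(-(n+1),\cdot)$ on track~1 or track~2, the only pairing without a common such track being a type-3 clause against a type-4 clause, which meet at the origin on track~3; and \textbf{(iv)} for a clause $c_k$ on variables $i,j$, its interval misses exactly the literal interval of $i$ (and of $j$) that agrees with the literal appearing in $c_k$, while it meets the complementary literal interval of $i$ and $j$ and meets \emph{both} literal intervals of every variable not occurring in $c_k$.

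The correspondence I use throughout is that a clique contains the $12$ copies of the literal set \emph{false} by an assignment. With this convention the forward direction is immediate. Given an assignment satisfying at least $z$ clauses, I place into the clique the $12$ copies of the false literal of each variable---$12n$ intervals---together with the interval of every satisfied clause. Fact~(i) is not triggered since only one literal per variable is used; fact~(ii) together with the coincidence of duplicate copies makes these $12n$ literal intervals a clique; fact~(iii) supplies all clause--clause edges; and fact~(iv) shows each satisfied clause interval meets every chosen literal interval, because on its own two variables the satisfying truth value makes the complementary (false) literal the chosen one---precisely the literal the clause interval meets---and on every other variable it meets both literals. The clique produced has size $12n+(\text{number of satisfied clauses})\ge 12n+z$.

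The reverse direction is the crux, and it is where the occurrence bound enters. Let $K$ be a clique of size at least $12n+z$, let $a_i\in\{0,\dots,12\}$ count the literal copies of variable $i$ in $K$ (one sign by~(i), at most $12$ by construction), and let $K_C$ be the clause intervals of $K$, so that $|K_C|=|K|-\sum_i a_i\ge z+\sum_i(12-a_i)$. I define an assignment by setting $x_i$ true if $K$ contains the $\bar x_i$ intervals, false if it contains the $x_i$ intervals, and arbitrarily on the set $U$ of variables with $a_i=0$. By fact~(iv), for any clause interval $c_k\in K$ the literal of an occurring variable $i$ that $c_k$ misses cannot belong to $K$; hence whenever $i\notin U$ the literal present in $K$ is the complementary one, and $x_i$ receives exactly the value the clause requires. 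Consequently every clause in $K_C$ whose two variables both lie outside $U$ is satisfied.

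It remains to control the clauses of $K_C$ that touch $U$, and this is the one genuinely delicate point: a clique may contain a clause interval for variables on which it carries no literal copies, leaving those variables undetermined, and a priori such clauses could all be unsatisfied. The occurrence bound resolves this exactly. Since each variable occurs in at most $12$ clauses, at most $12|U|$ intervals of $K_C$ involve a variable of $U$, whereas $\sum_i(12-a_i)\ge 12|U|$. Therefore the number of satisfied clauses is at least $|K_C|-12|U|\ge z+\sum_i(12-a_i)-12|U|\ge z$, which finishes the proof. I expect the only laborious part to be the endpoint-checking behind facts~(iii) and~(iv) across the four clause types; the whole conceptual weight sits in this last count, where the constant $12$ is chosen precisely to cancel the literal-deficiency term $\sum_i(12-a_i)$ against the occurrence bound.
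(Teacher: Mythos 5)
Your proof is correct, and your forward direction coincides with the paper's (both select the $12$ copies of the \emph{false} literal of each variable plus the intervals of the satisfied clauses, using the same four structural observations). The reverse direction, however, takes a genuinely different route. The paper first converts an arbitrary clique into a \emph{canonical} clique of no smaller size that contains, for every variable, all $12$ copies of exactly one of its literals; this exchange is justified by the fact that the $12$ copies of a literal are at least as numerous as the clause intervals that block them, and once the clique is canonical the assignment is read off directly and every clause interval present is automatically satisfied. You instead leave the clique untouched and run a global count: the deficiency $\sum_i(12-a_i)$ forces $|K_C|\ge z+\sum_i(12-a_i)$, at most $12|U|\le\sum_i(12-a_i)$ of these clause intervals can involve an undetermined variable, and every remaining one is satisfied, giving at least $z$ satisfied clauses. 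Both arguments hinge on the same numerical coincidence ($12$ copies versus at most $12$ occurrences per variable), but yours avoids the one slightly delicate point in the paper's version --- re-verifying that the vertex set obtained after the exchange is still a clique --- at the cost of the extra bookkeeping with $U$ and the deficiency term; the paper's canonicalization is terser and yields the cleaner intermediate statement that in a canonical clique every clause interval corresponds to a satisfied clause.
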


\begin{proof}
The following observations can be easily verified:
\begin{itemize}

\item
For any two variables $x_i$ and $x_j$, $i \neq j$,
the $3$-track intervals for the literals of $x_i$
overlap with
the $3$-track intervals for the literals of $x_j$.

\item
For any two clauses $c_k$ and $c_l$, $k \neq l$,
the $3$-track interval for $c_k$
overlaps with
the $3$-track interval for $c_l$.

\item
For each variable $x_i$,
the $3$-track intervals for the positive literal of $x_i$
are disjoint from
the $3$-track intervals for the negative literal of $x_i$.

\item
For each clause $c_k$,
the $3$-track interval for $c_k$
is disjoint from
the $3$-track intervals for the two literals in $c_k$,
and overlaps with the $3$-track intervals for the other literals.

\end{itemize}

We first prove the direction implication.
Suppose
there is an assignment for $X$ that satisfies at least $z$ clauses in $C$.
We select a subset of pairwise-intersecting $3$-track intervals as follows.
For each clause $c_k$ in $C$,
select the corresponding $3$-track interval if the clause is satisfied.
Then, for each variable $x_i$ in $X$,
select the $12$ copies of the $3$-track interval
for the negative literal of $x_i$ if the variable is true,
and select the $12$ copies $3$-track interval
for the positive literal of $x_i$ if the variable is false.
Thus we obtain a clique of size at least $w = 12 n + z$ in $G$.

We next prove the reverse implication.
Suppose $G$ has a clique of size at least $w = 12 n + z$.
Note that in our construction
the number of $3$-track intervals for each literal
is at least the number of $3$-track intervals
for all clauses that contain the literal.
Thus, by replacing vertices,
any clique can be converted into a clique of at least the same size
in \emph{canonical} form, which includes, for each variable $x_i$ in $X$,
either all $12$ copies of the $3$-track interval
for the positive literal of $x_i$,
or all $12$ copies of the $3$-track interval
for the negative literal of $x_i$.
Assign $x_i$ false if the clique includes the $3$-track intervals for its
positive literal,
and assign $x_i$ true if the clique includes the $3$-track intervals for its
negative literal.
Thus we obtain an assignment for $X$ that satisfies at least $z$ clauses in $C$.
\end{proof}

Let $z^*$ be the maximum number of clauses in $C$ that can be satisfied by
an assignment of $X$, and let $w^*$ be the maximum size of a clique in $G$.
By Lemma~\ref{L1}, we have $w^* = 12n + z^*$.
Since each clause in $C$ is the conjunction of exactly two literals
of the variables in $X$,
we have $n \le 2m$.
Moreover, since a random assignment for $X$ satisfies
each clause in $C$ with probability at least $1/4$,
we have $z^* \ge m/4 \ge n/8$.
It follows that
$$
w^* = 12n + z^* \le (8\cdot 12 + 1) z^* = 97 z^*.
$$

Consider any clique of size $w$ in $G$.
Following the reverse implication in the proof of Lemma~\ref{L1},
we can find an assignment for $X$
that satisfies at least $z = w - 12 n$ clauses in $C$.
Note that
$$
|z^* - z| \le |w^* - w|.
$$
Thus we have an L-reduction
from
\textsc{$12$-Occ-Max-E$2$-CSAT}
to
\textsc{Clique} in $3$-track interval graphs
with $\alpha = 97$ and $\beta = 1$.
This completes the proof of Theorem~\ref{T1}.

\paragraph{Postscript}
This note was written in April 2010. The author would like to thank
St\'ephane Vialette for bringing the open questions of
Butman et~al.~\cite{BHLR07} to his attention in 2010,
and for notifying him of the recent results of Francis et~al.~\cite{FGO12},
who proved the APX-hardness of \textsc{Clique} in several classes of
multiple-interval graphs (including $3$-track interval graphs)
using the subdivision technique of~\cite{CC07}.

\end{document}